\newtheorem{theorem}{Theorem}
\newtheorem{lemma}[theorem]{Lemma}
\newtheorem{cor}[theorem]{Corollary}
\newcommand{\cS}{\mathcal{S}}
\newcommand{\F}{\mathbb{F}}
\begin{document}

\title{Correlation, Linear Complexity, Maximum order Complexity on Families of binary Sequences}

\author{Zhixiong Chen$^1$, Ana I. G\'{o}mez$^2$,\\
 Domingo G\'{o}mez-P\'{e}rez\footnote{Corresponding author: domingo.gomez@unican.es.}~~$^3$, Andrew Tirkel$^{4}$\\
1. Key Laboratory of Applied Mathematics of Fujian Province University, \\ Putian University,  Putian, Fujian
351100, P. R. China\\
2. Universidad Rey Juan Carlos, Spain\\
3. Universidad de Cantabria, Spain\\
4. Scientific Technologies, Australia
}

\maketitle

\begin{abstract}
  Correlation measure of order $k$ is an important measure of randomness in binary sequences. This measure tries to look for dependence between several shifted version of a sequence.
  We study the relation between the correlation measure of order $k$ and another two pseudorandom measures: the $N$th linear complexity and the $N$th maximum order complexity. We simplify and improve several state-of-the-art lower bounds for
  these two measures using the Hamming bound as well as weaker bounds derived from it.
\end{abstract}
\textbf{Keywords}. Pseudorandom sequences, Binary sequences, Correlation measure of order $k$,  $N$th linear complexity, $N$th maximum order complexity

\section{Introduction}
\label{sec:orgc42ebce}

For a positive integer $N$, the \emph{$N$th linear complexity
  $L(\cS,N)$} of a binary sequence $\cS=(s_i)^{\infty}_{i=0}$ over the two-element finite field $\mathbb{F}_2=\{0,1\}$ is
the smallest positive integer
$L$ such that there are constants $c_0,c_1,\ldots,c_{L-1} \in \F_2$ with
\begin{equation}
  \label{eq:defLin}
  s_{i+L}=c_{L-1}s_{i+L-1}+\ldots+c_{0}s_{i},\quad \mbox{for }0 \le i < N - L.
\end{equation}
We use the convention $L(\cS,N)=0$ if $s_0=\ldots=s_{N-1}=0$ and
$L(\cS,N)=N$ if $s_0=\ldots=s_{N-2}=0\ne s_{N-1}$.
The $N$th linear complexity is a measure for the predictability of a
sequence and thus its unsuitability in cryptography.
If $\cS$ is $T$-periodic, we have
$ L(\cS,N)=L(\cS, 2T)$ for $N\ge 2T$. This number is the \emph{linear
complexity} of the sequence $\cS$.

Analogously, the $N$th \emph{maximum-order complexity} $M(\cS,N)$ of a binary sequence $\cS=(s_i)_{i=0}^\infty$ is defined as the smallest positive integer $M$
such that there is a polynomial $f(x_1,\ldots,x_M)\in \F_2[x_1,\ldots,x_M]$
with
$$
s_{i+M}=f(s_i,s_{i+1},\ldots,s_{i+M-1}),\quad \mbox{for }0 \le i < N - M,
$$
see~\cite{ja89,ja91,nixi14}. Again, if the sequence is $T$-periodic,
$M(\cS,N)=M(\cS,2T)$ for $N\ge 2T$. This is called
\emph{maximum-order complexity} of $\cS$.

Obviously, we have
$M(\cS,N)\le L(\cS,N)$,
so maximum-order complexity is a finer measure of pseudorandomness than linear complexity.

Let $k$ be a positive integer. The \emph{($N$th) correlation
  measure of order $k$} of $\cS$ is defined as
$$C_{k}(\cS,N)=\max_{U,D}\left|\sum^{U-1}_{n=0}(-1)^{s_{n+d_{1}}+s_{n+d_{2}}+\ldots+s_{n+d_{k}}}\right|,$$
where the maximum is taken over all $U \leq N - k + 1$ and
$D = (d_1, d_2, \ldots , d_k)$ with
integers $0 \leq d_1 < d_2 < \ldots < d_k \le N - U$.
This is an adaptation to the binary case of the definition
concerning sequences over $\{-1,+1\}$, introduced by Mauduit
and S\'ark\"ozy~\cite{masa97}.

Brandst\"atter and Winterhof~\cite{branwin06} proved the following
relation between the $N$th linear complexity and the correlation
measures of order $k$:
\begin{equation}
  \label{eq:brandstatter}
L(\cS,N)\ge N - \max_{1\le k \le L(S,N)+1} C_k(\cS,N),
\quad\mbox{for }N\ge 1.
\end{equation}

Recently, I\c{s}{\i}k and Winterhof~\cite{IW2017} have derived an
analogous result concerning the $N$th maximum-order
complexity:
\begin{equation}
  \label{eq:Isik}
  M(\cS,N) \ge N- 2^{M(\cS,N)+1} \cdot \max_{1\le k \le M(\cS,N)+1} C_k(\cS,N),\quad\mbox{for } N\ge 1.
\end{equation}

Roughly speaking, any sequence with small correlation measure up to a
sufficiently large order $k$ must have a high $N$th  maximum-order
complexity (and hence $N$th linear complexity) as well.
For surveys on linear complexity and related measures of pseudorandomness, see~%
\cite{gy13,mewi13,ni03,sa07,towi07,wi10}.

The problem with these bounds is that they seem to be far from tight. Even if the correlation measure is close to
the expected value for a random binary sequence, the bounds above are far from expected.

Due to the constraints on the $N$th correlation measure given by Gyarmati and Mauduit~\cite{Gyarmati2012}, which implies that
the correlation measure is bigger than $\sqrt{N}$ for many orders,
the lower bound in Equation~\eqref{eq:brandstatter} is $2\sqrt{N}$. For $M(\cS,N)$, the lower bound can not be greater than $(\log N )/2$, see e.g. \cite{IW2017}.
Notice that the expected $N$th linear complexity of a random binary sequence is $N/2$~%
\cite{rueppel1985linear}. For the $N$th maximum-order complexity, the expected value is $2\log N$~\cite{ja89}.

In this work, we discuss the higher order correlation of
binary sequences, improving the lower bounds
shown in Equations~\eqref{eq:brandstatter}
and~\eqref{eq:Isik}. Then, we review the literature and
improve the lower bounds on linear complexity and
maximum-order complexity of several known sequences.

Our results are based on the
Hamming bound on error-correcting codes
(see e.g.~\cite[Theorem
3.4.6]{Niederreiter2015}).
Additionally, we use the following definition for
the \emph{periodic correlation measure of order $k$} of a $T$-periodic binary sequence $\cS$,
$$
\theta_k(\cS) = \max_{D}\left|\sum^{T-1}_{n=0}(-1)^{s_{n+d_{1}}+s_{n+d_{2}}+\ldots+s_{n+d_{k}}}\right|,
$$
where $D=(d_1,\ldots, d_k)$, with
$0\le d_1 < d_2 < \ldots < d_k < T$.

A binary sequence $\cS$ is said to have a \emph{full peak}
in the aperiodic correlation measure of order $k$ if $C_k(\cS,N)= N-k+1$.
It has a \emph{half peak} if $C_{k}(\cS, N)\ge N/2$. The
same definitions apply also for $\theta_k(\cS)$, the
periodic correlation measure of order $k$.

We suppress ``of order $k$'' when referring to the correlation measure  when the order $k$ is clear from the context.

\section{Higher-order correlation measure}
\label{sec:org1c9e72b}

We prove below a link between the linear complexity of a
sequence and its correlation measure.
Before, we state a direct consequence of the Hamming
bound~\cite[Theorem 3.4.6]{Niederreiter2015}.

\begin{lemma}
\label{lemma:Hamming}
Let $p$ be a prime number and $C \subseteq \F_{p}^{T}$ a linear subspace
of dimension $d$ (i.e.\@ a linear code over $\mathbb{F}_p$).
If, for some integer $t > 0$,
$$
\sum_{i=0}^{\lfloor (t-1)/2\rfloor}{T\choose i }(p-1)^{i}> p^{T-d},
$$
there exists a nonzero vector $\vec{v}\in C$ with at most $t$ nonzero components.
\end{lemma}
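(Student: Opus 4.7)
The plan is to deduce Lemma~\ref{lemma:Hamming} from the classical sphere-packing (Hamming) bound by contraposition. Suppose, for contradiction, that every nonzero vector in $C$ has strictly more than $t$ nonzero components, equivalently Hamming weight at least $t+1$. Because $C$ is linear, its minimum distance coincides with the smallest nonzero weight, so $d_{\min}(C)\ge t+1$.

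Set $e=\lfloor(t-1)/2\rfloor$. Whether $t$ is even or odd, one checks that $2e+1\le t+1\le d_{\min}(C)$, so the closed Hamming balls $B(\vec{c},e)\subseteq\F_p^{T}$ of radius $e$ around distinct codewords $\vec{c}\in C$ are pairwise disjoint. Each such ball has cardinality $\sum_{i=0}^{e}\binom{T}{i}(p-1)^{i}$, so counting the disjoint union of balls over all $\vec{c}\in C$ inside $\F_p^{T}$ yields
\[
|C|\cdot\sum_{i=0}^{\lfloor(t-1)/2\rfloor}\binom{T}{i}(p-1)^{i}\;\le\;p^{T}.
\]
Dividing by $|C|=p^{d}$ gives $\sum_{i=0}^{\lfloor(t-1)/2\rfloor}\binom{T}{i}(p-1)^{i}\le p^{T-d}$, directly contradicting the strict inequality assumed in the statement. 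Hence some nonzero $\vec{v}\in C$ has at most $t$ nonzero components.

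There is no real obstacle here: the lemma is essentially the contrapositive of the textbook sphere-packing estimate, and the only bookkeeping point is the verification that $2\lfloor(t-1)/2\rfloor+1\le t+1$ in both parities of $t$, which is what legitimises the disjoint-balls step. Once that check is in place, the volume count and the substitution $|C|=p^{d}$ close the proof in one line; no special structure of $p$ beyond its role as the field size enters the argument.
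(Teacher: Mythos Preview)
Your proof is correct and matches the paper's approach exactly: the paper does not supply its own argument but simply presents the lemma as ``a direct consequence of the Hamming bound~\cite[Theorem 3.4.6]{Niederreiter2015}'', and what you have written is precisely that standard sphere-packing derivation by contraposition. The parity check $2\lfloor (t-1)/2\rfloor+1\le t+1$ is the only bookkeeping needed, and you handle it cleanly.
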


The strong relation between cyclic codes and periodic
sequences allows using the previous lemma to relate the
linear complexity of a sequence with the existence of full peaks
in the periodic correlation measure.

\begin{theorem}
  \label{thm:periodic}
Let $\cS=(s_i)^{\infty}_{i=0}$ be a $T$-periodic binary sequence with
 linear complexity $L$.
If, for some integer $t > 0$,
$$
\sum_{i=0}^{\lfloor (t-1)/2\rfloor}{T\choose i }\ge 2^{L},
$$
the sequence has a full peak in the periodic correlation measure
$\theta_k(\cS)$ for some $k$ with $1 < k \le t$, i.e., $\theta_k(\cS)=T$.
\end{theorem}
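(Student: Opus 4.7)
The plan is to reduce the claim to a statement about the minimum distance of a specific code and then invoke Lemma~\ref{lemma:Hamming}. First, I would associate with $\cS$ the linear code
\[
C = \Bigl\{\vec{v}=(v_0,\dots,v_{T-1})\in\F_2^T :\ \sum_{i=0}^{T-1} v_i\, s_{n+i}\equiv 0 \pmod 2\ \text{ for every }n\ge 0\Bigr\},
\]
which is a linear subspace of $\F_2^T$. The key observation is a dictionary between codewords and shift patterns: a nonzero $\vec{v}\in C$ of Hamming weight $k$, supported on indices $d_1<\dots<d_k$, satisfies $s_{n+d_1}+\dots+s_{n+d_k}=0$ in $\F_2$ for every $n$, so $(-1)^{s_{n+d_1}+\dots+s_{n+d_k}}=1$ throughout and hence $\theta_k(\cS)=T$.

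Next, I would compute $\dim C=T-L$. For this I use the standard identification of $T$-periodic binary sequences with $\F_2[x]/(x^T-1)$, sending $\cS$ to $s(x)=\sum_{i<T} s_i x^i$. The cyclic shifts of $\vec{s}=(s_0,\dots,s_{T-1})$ span a cyclic code $\mathcal{C}_{\cS}\subseteq\F_2^T$, which as an ideal of $\F_2[x]/(x^T-1)$ equals $(s(x))$. If $m(x)$ is the minimal polynomial of the linear recurrence for $\cS$, then $\deg m=L$ and $m(x)\mid x^T-1$; it follows that $\mathcal{C}_{\cS}$ has generator polynomial $(x^T-1)/m(x)$ and dimension $L$. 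Since $C$ consists exactly of vectors orthogonal to every cyclic shift of $\vec{s}$, I obtain $C=\mathcal{C}_{\cS}^{\perp}$ and thus $\dim C=T-L$.

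With this in hand, I would apply Lemma~\ref{lemma:Hamming} to $C$ with $p=2$ and $d=T-L$, so that $p^{T-d}=2^L$; the hypothesis then yields a nonzero $\vec{v}\in C$ of weight $k\le t$, and by the dictionary above $\theta_k(\cS)=T$. It remains only to exclude $k=1$: a weight-one codeword would force $\cS\equiv 0$ and hence $L=0$, but then $C=\F_2^T$ and I can replace $\vec{v}$ by any weight-two vector to obtain $k=2$ as required.

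The main obstacle I anticipate is the dimension computation $\dim C=T-L$; everything else is a routine packaging of the hypothesis into the Hamming bound. Making that computation precise requires the standard correspondence between linear complexity, minimal polynomials, and the generator/check-polynomial formalism for cyclic codes.
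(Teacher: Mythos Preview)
Your proposal is correct and follows essentially the same approach as the paper: build the cyclic code spanned by the shifts of $\cS$, pass to its orthogonal (which you call $C$ and the paper calls $C^{\bot}$), note this has dimension $T-L$, and apply Lemma~\ref{lemma:Hamming} with $p=2$ to extract a low-weight codeword giving the full peak. The paper asserts $\dim\mathcal{C}_{\cS}=L$ as ``trivial to check'' and does not address the $k=1$ edge case; your version supplies both via the minimal-polynomial/generator-polynomial correspondence and an explicit degeneracy argument, but the underlying proof is the same.
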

\begin{proof}
Let $C \subseteq \F_{2}^{T}$ be the linear subspace generated by
$$
(s_{0},s_{1},\ldots, s_{T-1}),(s_{1},s_{2},\ldots, s_{0}),\ldots, (s_{T-1},s_{0}\ldots, s_{T-2}),
$$
i.e.\@ a sequence's period and all its shifted versions. We
denote by $C^{\bot}$ the \emph{orthogonal subspace} of $C$, i.e.
$$
C^{\bot} = \left\{ (c_{0},\ldots, c_{T-1}) \in \F_{2}^{T} ~~ : ~~ \sum_{i=0}^{T-1}c_{i}s_{n+i}
= 0, \ \forall n\ge 0 \right\}.
$$
It is trivial to check that $\dim(C) = L$, and hence $\dim(C^{\bot}) = T - L$.
By Lemma~\ref{lemma:Hamming} (with $p=2$), there exists a vector in $C^\bot$ with exactly $k \le t$ nonzero components.
Let  $d_{1},\ldots, d_{k}$ be their indices,
so
$$
\sum_{j=1}^{k}s_{n+d_{j}} = 0,\quad \forall n\ge 0.
$$
This implies that there is a full peak in periodic correlation measure of order $k$.
\end{proof}
For the aperiodic correlation, we have the following result.

\begin{theorem} 
  \label{cor:aperiodic}
  Let $\cS=(s_i)^{\infty}_{i=0}$ be a $T$-periodic binary sequence with
 $N$th linear complexity $L(\cS, N)$.
If, for some integer $t > 0$,
$$
{\lfloor N/2\rfloor \choose t }\ge 2^{L(\cS,N)},
$$
the sequence has a half peak in the aperiodic correlation
measure $C_k(\cS,N)$ for some $k$ with $1 < k \le 2t$, i.e., $C_k(\cS,N)\geq N/2$.
\end{theorem}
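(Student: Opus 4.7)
The plan is to mimic the proof of Theorem~\ref{thm:periodic}, but work inside a window of length $M := \lceil N/2 \rceil$ with shifts drawn from $\{0, 1, \ldots, m-1\}$ where $m := \lfloor N/2 \rfloor$, so that every accessed index $n + d$ stays inside $[0, N-1]$ where the $N$th linear recurrence is valid. Setting $L := L(\cS, N)$, I would first attach to each $d \in \{0, \ldots, m-1\}$ the window vector $u_d := (s_d, s_{d+1}, \ldots, s_{d+M-1}) \in \F_2^M$ and verify that every $u_d$ lies in the $\F_2$-span of $u_0, \ldots, u_{L-1}$. This step applies (\ref{eq:defLin}) componentwise: the identity $u_{j+L} = \sum_{r=0}^{L-1} c_r\, u_{j+r}$ reduces on coordinate $\ell$ to the defining recurrence with index $i = j + \ell$, which is legal as long as $j + \ell < N - L$, i.e.\ $j \le N - L - M$. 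Iterating this range covers every $d$ up to $N - M = m$, in particular all $d \le m - 1$.

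The second step is a Hamming-style pigeonhole. To each subset $D \subseteq \{0, \ldots, m-1\}$ of size exactly $t$, associate the signature $\phi(D) := \sum_{d \in D} u_d$. Every $\phi(D)$ lies in the same $L$-dimensional subspace, so takes at most $2^L$ values, and the hypothesis $\binom{m}{t} \ge 2^L$ forces two distinct subsets $D \ne D'$ with $\phi(D) = \phi(D')$. Writing $D \triangle D' = \{d_1 < \cdots < d_k\}$, the identity $\phi(D) = \phi(D')$ translates into
$$
s_{n + d_1} + s_{n + d_2} + \cdots + s_{n + d_k} \equiv 0 \pmod{2}, \quad n = 0, 1, \ldots, M-1,
$$
so the correlation sum with $U = M$ has absolute value exactly $M \ge N/2$; the choice is admissible in the definition of $C_k(\cS, N)$ since $d_k \le m - 1 \le N - U$ and $k \le m \le N - U + 1$.

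The parity identity $|D \triangle D'| = 2t - 2\,|D \cap D'|$ is what forces $k$ to be a nonzero even integer and hence to satisfy $2 \le k \le 2t$, explaining the exclusion of $k = 1$ and the factor of $2$ in the upper bound of the statement. I expect the only delicate step to be the index bookkeeping in the span argument, which must be carried out carefully to confirm that the recurrence can indeed be iterated over the whole window $\{0, \ldots, m-1\}$; the rest is an aperiodic transport of the Hamming-bound-by-pigeonhole argument used for Theorem~\ref{thm:periodic}.
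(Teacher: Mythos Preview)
Your proof is correct and follows essentially the same approach as the paper: a pigeonhole argument comparing the $\binom{\lfloor N/2\rfloor}{t}$ shift combinations against the at most $2^{L(\cS,N)}$ vectors in an $L$-dimensional space, then taking the symmetric difference of two colliding shift sets to produce the half peak of order $k\le 2t$. Your explicit vector formulation and index bookkeeping for the validity of the recurrence on the window of length $\lceil N/2\rceil$ are in fact more careful than the paper's own argument, which leaves these details implicit.
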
  
\begin{proof}
  Suppose that the sequence satisfies  Equation~\eqref{eq:defLin}, which means that
  the first $L(\cS, N)$ elements and the recurrence generates the next $N - L(\cS, N)$.
  There are at most \(2^{L(\cS,N)}\) different sequences of length $N$ that can be
  generated by the same linear recursion.

One the other hand, any sequence $(y_{n})^{\infty}_{n=0}$ defined as
\begin{equation}
y_{n} = \sum_{j=1}^{t}s_{n+d_{j}},\quad\mbox{with }
0 \leq d_1 < \ldots < d_t < \lfloor N/2\rfloor ,
\end{equation}
can also be generated by that linear recursion. There are at least \({\lfloor N/2\rfloor\choose t}\) ways of choosing that shift set $\{d_j\}$.

Therefore, by hypothesis, there exist two different ordered list of shifts: \(\{d_{1},\ldots, d_{t}\}\) and \(\{e_{1},\ldots, e_{t}\}\) such that
\begin{equation}
\sum_{j=1}^{t}s_{n+d_{j}} = \sum_{j=1}^{k}s_{n+e_{j}} \implies
\sum_{j=1}^{t}(s_{n+d_{j}} - s_{n+e_{j}}) = 0,
\end{equation}
for $0\le n\le \lceil N/2\rceil \le  N-\max\{d_t, e_t\}$.
Then, there is a half peak in the $N$th correlation measure of order
at most \(2t\).
\end{proof}

The following result, which we state for its applications,
is a direct consequence of Theorem~\ref{cor:aperiodic}.
\begin{cor}
  \label{cor:result}
  Given any positive integers $K$ and $N$ with $K^{2}<N$.
  If a binary sequence $\cS$ satisfies
  $C_{k}(\cS, N) < N/2$ for every $k<K$, we have
  $$
  L(\cS, N)> \frac{1}{2}K(\log N + 1 - \log K) - \frac{1}{2}\log K + \delta,
  $$
  where $\delta$ is an absolute constant.
\end{cor}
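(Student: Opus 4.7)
The plan is to apply Theorem~\ref{cor:aperiodic} contrapositively. Setting $t = \lfloor (K-1)/2 \rfloor$ gives $2t \le K - 1$, so the hypothesis $C_k(\cS,N) < N/2$ for every $k < K$ rules out the conclusion of Theorem~\ref{cor:aperiodic} for this $t$ (there can be no half peak at any order $k$ with $1 < k \le 2t$). Hence the theorem's numerical premise must also fail, giving
\[
\binom{\lfloor N/2 \rfloor}{t} < 2^{L(\cS,N)}, \qquad \text{i.e.,} \qquad L(\cS,N) > \log_2 \binom{\lfloor N/2 \rfloor}{t}.
\]

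The remaining task is to lower-bound the right-hand side by the expression in the statement. I would use Stirling's formula in its asymmetric form,
\[
\log_2 \binom{n}{k} \;=\; k \log_2(n/k) + k \log_2 e - \tfrac{1}{2}\log_2 k + O(1),
\]
valid uniformly for $k^2 \le n$. Substituting $n = \lfloor N/2 \rfloor$ and $t \in [K/2-1, K/2]$, and using $\log_2(\lfloor N/2 \rfloor / t) = \log N - \log K + O(1)$, yields
\[
\log_2 \binom{\lfloor N/2 \rfloor}{t} \;\ge\; \tfrac{K}{2}(\log N - \log K) + \tfrac{K}{2}\log_2 e - \tfrac{1}{2}\log K + O(1).
\]
Since $\log_2 e > 1$, we may split $\tfrac{K}{2}\log_2 e = \tfrac{K}{2} + \tfrac{K}{2}(\log_2 e - 1)$; the latter (positive) excess, together with the $O(1)$ error, is absorbed into the absolute constant $\delta$, producing the stated bound $\tfrac{1}{2}K(\log N + 1 - \log K) - \tfrac{1}{2}\log K + \delta$.

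The role of the hypothesis $K^2 < N$ is precisely to place us in the ``small $k$'' regime of Stirling: it forces $t \le K/2 \le \tfrac{1}{2}\sqrt{N}$, so the approximations $\log_2(\lfloor N/2\rfloor - i) = \log_2(N/2) + O(1)$ for $0 \le i < t$, as well as the Stirling tail, contribute only a universal $O(1)$ after being multiplied by $t$. There is no genuine obstacle beyond routine bookkeeping; the only item requiring care is verifying that all errors --- from Stirling, from the floor functions, and from replacing $t$ by $K/2$ --- collapse into a single constant $\delta$ independent of $K$ and $N$, which is straightforward given $K^2 < N$.
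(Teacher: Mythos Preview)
Your proposal is correct and follows essentially the same route as the paper: apply Theorem~\ref{cor:aperiodic} in contrapositive form to obtain $2^{L(\cS,N)} > \binom{\lfloor N/2\rfloor}{t}$, then estimate the binomial coefficient via Stirling and take logarithms. Your treatment is in fact slightly more careful than the paper's---you choose $t=\lfloor (K-1)/2\rfloor$ to guarantee $2t<K$ (the paper simply writes $K/2$), and you explicitly explain how the hypothesis $K^{2}<N$ ensures the Stirling error is $O(1)$.
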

\begin{proof}
  Because the result of Theorem~\ref{cor:aperiodic} does not hold,
  it must be the case that:
  \begin{equation}
  \label{eq:ineq1}
  2^{L(\cS,N)} \ge {\lfloor N/2\rfloor \choose K/2 },
  \end{equation}
  where substitute the combinatorial number by the Stirling approximation
  $$
  {\lfloor N/2\rfloor \choose K/2 } \approx
  \left( \frac{Ne}{K} \right)^{K/2}(2\pi K)^{-1/2}\varepsilon,
  $$
  where $\varepsilon$ is some positive constant.
  Taking logarithms at both sides of Equation~\eqref{eq:ineq1}, we get the result.
\end{proof}
We compare~Equation~\eqref{eq:brandstatter} and this new bound. First, whenever we can apply
the former, Corollary~\ref{cor:result} applies as well and
the lower bound is improved by a factor of $\log N$.
Also, it is enough to obtain a non-trivial bound for $C_{k}(\cS, N)$, a strong bound being no longer necessary.

These results have immediate application to the families of
binary sequences summarized in Table~\ref{tab:periodic}. For those sequences' definition, as well as parameters and properties,
see the book of Golomb and Gong~\cite{Golomb2005}.

\begin{table}[htbp]
\centering
\begin{tabular}{|c|c|c|c|}
\hline
Family & Period & Linear & Bound on $k$\\
 &  & complexity & for the existence\\
 &  &  & of a peak\\
\hline
$m$-sequences & $2^{\ell}-1$ & $\ell$ & 3\\
Small Kasami & $2^{\ell}-1$ & $3\ell/2$ & 5\\
Gold codes & $2^{\ell}-1$ & $2\ell$ & 7\\
Large Kasami & $2^{\ell}-1$ & $5\ell/2$ & 9\\
3-term trace & $2^{\ell}-1$ & $3\ell$ & 9\\
5-term trace & $2^{\ell}-1$ & $5\ell$ & 11\\
Welch-Gong & $2^{\ell}-1$ & $2^{\ell/3}+1$ & $(2^{\ell/3}+1)/\ell$\\
 &  &  & \\
\hline
\end{tabular}
  \caption{Different families of binary sequences, together with the upper bound on $k$ such that
there exists a peak in the periodic correlation measure of order $k$ according to Theorem~\ref{thm:periodic}}
\label{tab:periodic}
\end{table}

Results on Small Kasami and $m$-sequences have already been discovered by
Warner~\cite{warner-1994-tripl-correl,warner-1993-tripl-correl}. In the
case of the Gold codes, Adams~%
\cite{adams2004identification} presented some results regarding partial
peaks and conjectured on full peaks for order 9.
Boztas and Parampalli~\cite{Boztas2012} studied the third-order
correlation in order to assure the probability of intercept of Gold
codes.

We now enunciate a simple theorem of the same flavour for the $N$th maximum order complexity, improving the bound in Equation~\eqref{eq:Isik}.
\begin{theorem}   
  \label{Thm:maximum}
  If a binary sequence $\cS$ satisfies
  $M(\cS,N)\le \log N - 2$, it has a half peak in the
  aperiodic correlation measure of order $2$, i.e.\@
  $C_2(\cS, N) \ge N/2$.
\end{theorem}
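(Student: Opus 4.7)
The plan is to combine a pigeonhole argument on $M$-length windows with the fact that such a window determines the rest of the sequence (under the maximum-order complexity recurrence). Set $M=M(\cS,N)$ and assume $M\le \log N - 2$, so $2^M \le N/4$. I would consider the length-$M$ windows
$$
w_d = (s_d, s_{d+1}, \ldots, s_{d+M-1}) \in \F_2^M
$$
for $d = 0, 1, \ldots, \lfloor N/2 \rfloor$. This yields $\lfloor N/2 \rfloor + 1 > 2^M$ windows, while only $2^M$ binary words of length $M$ exist, so by pigeonhole there are indices $0 \le d_1 < d_2 \le \lfloor N/2 \rfloor$ with $w_{d_1} = w_{d_2}$.

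Next I would invoke the definition of maximum-order complexity: there is a polynomial $f \in \F_2[x_1,\ldots,x_M]$ such that $s_{i+M} = f(s_i,\ldots,s_{i+M-1})$ for every $0 \le i < N-M$. Starting from $s_{d_1+j} = s_{d_2+j}$ for $j=0,\ldots,M-1$, a straightforward induction using this recurrence propagates the equality to $s_{d_1+n} = s_{d_2+n}$ for every $n$ such that $d_2 + n \le N-1$, i.e.\ for $n = 0, 1, \ldots, N-1-d_2$.

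Finally I would take $U = N - d_2$ and $D = (d_1, d_2)$ in the definition of $C_2(\cS,N)$. Since $d_2 \le \lfloor N/2\rfloor$, we have $U \ge N/2$, and the termwise equality just established gives
$$
\left|\sum_{n=0}^{U-1} (-1)^{s_{n+d_1} + s_{n+d_2}}\right| = U \ge N/2,
$$
which is exactly the half peak claimed. The only step that requires any care is verifying that the pigeonhole count $\lfloor N/2\rfloor + 1 > 2^M$ really follows from the assumption $M \le \log N - 2$ (and that the ranges are non-trivial, which is immediate since $M \le \log N - 2 \ll N/2$ for any $N \ge 4$); conceptually, the argument is just \emph{one} window repetition plus determinism, and no obstacle beyond bookkeeping is expected.
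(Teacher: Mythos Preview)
Your proposal is correct and follows essentially the same approach as the paper: a pigeonhole collision among length-$M$ windows, followed by propagation via the maximum-order recurrence to obtain $s_{n+d_1}=s_{n+d_2}$ on a stretch of length at least $N/2$. The only cosmetic difference is that the paper packages the pigeonhole step as an appeal to a periodicity result (giving $d_2<2^M\le N/4$), whereas you carry it out directly over the range $d\le\lfloor N/2\rfloor$; both choices lead to the same conclusion.
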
     
\begin{proof}
  In order to simplify the notation, $M=M(\cS,N)<\log N -2.$
  Under the hypothesis and since the first $N$ elements of
  the sequence can be generated by a polynomial with
  $M$ variables, i.e.
  $$
  s_{i+M}=f(s_i,s_{i+1},\ldots,s_{i+M-1}),\quad \mbox{for }0 \le i < N - M.
  $$
  By \cite[Propostion 2]{ja89}, the period of the sequence $(s_i)_{i=0}^{\infty}$ is
  less than $2^M$, see the explanation in the footnote\footnote{The idea is that the different possibilities for the
    tuples $(s_i,s_{i+1},\ldots,s_{i+M-1})$ is, at most, $2^M$. The tuple defines
    the next element, so this bounds the period of the sequence.}.

  This means that there exists $0\le d_1<d_2 < 2^M$ such
  that $s_{i+d_1} = s_{i+d_2}$ for $0\le i <N - M - d_2.$ The final step is
  $N - M - d_2> N - \log N +2 - N/4 > N/2$ and this finishes the proof.

\end{proof}





\section{Some applications}
\label{sec:applications}

\textbf{Hall's sextic residue sequence}.
The recent work of Aly and Winterhof~\cite{aly2019note}
studied Hall's sextic sequence, which is a binary
sequence with prime period $T = 1\mod 6$.  For such a period
and a primitive root modulo $T$, say $g$, {\em Hall's sextic
residue sequence} ${\cal H}=(h_n)^{\infty}_{n=0}$ is defined as
follows: let
\begin{equation}\label{cycl}
C_{\ell}=\{g^{6i+\ell}\ |\ 0 \leq i < (T - 1)/6\},\quad \ell=0,1,\ldots,5,
\end{equation}
be the cyclotomic cosets modulo $T$ of order $6$.
Then, for $n \ge 0$,
\begin{equation}
\label{1}
h_n= \left\{\begin{array}{ll} 1,&  \mbox{if }  n\bmod T \in C_0 \cup C_1 \cup C_3;\\
                   0, &  \mbox{otherwise}.
    \end{array}\right.
\end{equation}

Hall's sextic sequence has several desirable features of pseudorandomness,
one of them being low correlation measure:

\begin{equation}
\label{Hall:bound}
C_{k}({\cal H}, N) = O\left( \left (\frac{14}{3}\right )^k k \sqrt{T}\log T \right).
\end{equation}
Using this bound and  the lower bound proved by
Brandst\"atter and Winterhof~\cite{branwin06}, it is shown in the
reference article~\cite{aly2019note} that the $N$th linear complexity is
$\Omega(\log T)$.
This is improved in the following result.

\begin{cor}
  For any $\varepsilon>0$, a sufficiently large $T$ and $N> 2 T^{1/2+\varepsilon}(\log T)^2$,
  the $N$th linear complexity of Hall's sextic sequence ${\cal H}$ satisfies
  $$
  L({\cal H}, N) \gg  (\log N)^2,
  $$
  where the implied constant depends on $\varepsilon$.
\end{cor}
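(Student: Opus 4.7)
\medskip
\noindent\textbf{Proof proposal.} The plan is to feed the Aly--Winterhof bound~\eqref{Hall:bound} into Corollary~\ref{cor:result}. Specifically, I want to find the largest admissible $K$ for which the hypothesis $C_k(\mathcal{H},N)<N/2$ holds for every $k<K$, and then translate the resulting $K$ into a lower bound on $L(\mathcal{H},N)$ via Corollary~\ref{cor:result}.

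First, I would observe that because $N>2T^{1/2+\varepsilon}(\log T)^2$, the assumed size of $N$ is polynomial in $T$, so $\log N = \Theta(\log T)$ with constants depending only on the exponent in the hypothesis. Using~\eqref{Hall:bound}, the inequality $C_k(\mathcal{H},N) < N/2$ is guaranteed as soon as
$$
\left(\tfrac{14}{3}\right)^{k} k \sqrt{T}\log T \ll N,
$$
and plugging in $N \ge 2T^{1/2+\varepsilon}(\log T)^2$ this reduces to $(14/3)^{k} k \ll T^{\varepsilon}\log T$. Taking logarithms, I can choose
$$
K = \left\lfloor \frac{\varepsilon \log T}{\log(14/3)}\right\rfloor,
$$
which, since $\log T$ and $\log N$ are of the same order, is $K = c(\varepsilon)\log N$ for some positive constant $c(\varepsilon)$. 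A quick sanity check: $K^2 = O((\log N)^2) < N$, so the hypothesis $K^2 < N$ in Corollary~\ref{cor:result} is satisfied for $T$ (hence $N$) sufficiently large.

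With this choice of $K$, Corollary~\ref{cor:result} gives
$$
L(\mathcal{H},N) > \tfrac{1}{2}K(\log N + 1 - \log K) - \tfrac{1}{2}\log K + \delta.
$$
Since $K = c(\varepsilon)\log N$, we have $\log K = \log\log N + O(1)$, hence
$$
L(\mathcal{H},N) \gg (\log N)\cdot(\log N - \log\log N) \gg (\log N)^{2},
$$
where the implied constant depends on $\varepsilon$ through $c(\varepsilon)$, as claimed.

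The only slightly delicate step is checking that the constants absorbed in the $O(\cdot)$ of~\eqref{Hall:bound} do not destroy the freedom to pick $K$ of order $\log T$; this is handled by using a small fraction of $\varepsilon\log T$ (rather than the full $\varepsilon\log T/\log(14/3)$) so that the implicit multiplicative constants are swallowed into $T^{\varepsilon'}$ for some $\varepsilon'<\varepsilon$. No other step presents an obstacle, since everything else is a routine substitution into Corollary~\ref{cor:result}.
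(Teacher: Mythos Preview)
Your proposal is correct and follows essentially the same approach as the paper: both choose $K$ proportional to $\varepsilon\log T$ (the paper uses $K=\varepsilon\log T/8$, you use $K=\lfloor \varepsilon\log T/\log(14/3)\rfloor$), verify via~\eqref{Hall:bound} that $C_k(\mathcal{H},N)<N/2$ for $k<K$, and then plug into Corollary~\ref{cor:result} to obtain $L(\mathcal{H},N)\gg (\log N)^2$. Your write-up is in fact slightly more careful, since you explicitly check the side condition $K^2<N$ and address how the implicit constant in~\eqref{Hall:bound} is absorbed.
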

\begin{proof}
  The correlation measure of order $k$ of Hall's sextic sequence is less than $N/2$ for $k \le \varepsilon \log T/8$,
  if $N\gg T^{1/2+\varepsilon}$. This is simple to see substituting in Equation~\eqref{Hall:bound},
  $$
  \left (\frac{14}{3}\right )^k k \sqrt{T}\log T \le
  \left (\frac{14}{3}\right )^{\log T/8} \log T \sqrt{T}\log T \le
  T^{1/2 + \varepsilon/2}(\log T)^2 < N/2.
  $$
  By Theorem~\ref{cor:aperiodic}, we have
  $$
  L({\cal H}, N)\ge \varepsilon \log T/16(\log N   - \log \varepsilon - \log \log N - 3)- \log \log N + \delta \gg
  (\log N)^2.
  $$
  This finishes the proof.
\end{proof}

\textbf{Fermat quotient threshold sequence}.
For prime $p$ and an integer $u$ with $\gcd(u,p)=1$, the {\it
Fermat quotient $q_p(u)$ modulo~$p$\/} is defined as the unique
integer with
$$
q_p(u) = \frac{u^{p-1} -1}{p} \pmod p, \qquad 0 \le q_p(u) < p.
$$
We also define
$$
q_p(kp) = 0,\qquad \mbox{for } k \in \mathbb{Z}.
$$
Note that $(q_p(u))$ is a $p^2$-periodic sequence modulo~$p$, so $T=p^2$. Then the \emph{binary threshold
sequence} ${\cal E}=(e_n)^{\infty}_{n=0}$ is defined by
$$
e_u=\left\{
\begin{array}{ll}
0, & \mathrm{if}\,\ 0\leq q_p(u)/p< \frac{1}{2};\\
1, & \mathrm{if}\,\ \frac{1}{2}\leq q_p(u)/p< 1.
\end{array}
\right.
$$
Note that for which applications a discrepancy bound with arbitrary
shifts is needed. Most discrepancy
bounds on nonlinear pseudorandom numbers found in the literature
consider only equidistant  shifts.

Using the same techniques, Chen et al.\@~\cite{chen2010structure} proved a bound on the correlation measure. In Theorem~3 of that paper, they showed that
\begin{equation}
  \label{eq:bound_fermat}
  C_{2}({\cal E}, N) \ll p (\log p)^3.
\end{equation}
The following corollary gives a new lower bound on the $N$th linear complexity.
\begin{cor}
  For any $\varepsilon>0$, a sufficiently large $p$ and $N> 2p^{1+\varepsilon}(\log p)^3$,
the $N$th linear complexity of the binary threshold sequence ${\cal E}$ satisfies
$$
L({\cal E},N)\gg \log N,
$$
where the implied constant depends on $\varepsilon$.
\end{cor}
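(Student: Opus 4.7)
The plan is to follow the same strategy as the previous corollary for Hall's sextic sequence, but in the simpler setting where only a bound on $C_2$ is available for the Fermat quotient threshold sequence.

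First, I would verify that the correlation bound \eqref{eq:bound_fermat}, combined with the hypothesis $N > 2p^{1+\varepsilon}(\log p)^3$, forces $C_2({\cal E}, N) < N/2$ for sufficiently large $p$. Writing $C_2({\cal E}, N) \le A\, p(\log p)^3$ for some absolute constant $A$, the desired inequality reduces to $A < p^{\varepsilon}$, which holds whenever $p > A^{1/\varepsilon}$.

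Second, I would apply Theorem~\ref{cor:aperiodic} with the minimal parameter $t = 1$. In that case the range $1 < k \le 2t$ collapses to $k = 2$, so the contrapositive of the theorem reads: if $C_2({\cal E}, N) < N/2$, then
$$
\binom{\lfloor N/2 \rfloor}{1} < 2^{L({\cal E}, N)},
$$
i.e.\ $L({\cal E}, N) > \log \lfloor N/2 \rfloor \ge \log N - 2$. This immediately yields $L({\cal E}, N) \gg \log N$, with the implied constant depending on $\varepsilon$ through the threshold on $p$ coming from the first step.

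There is no real obstacle here: the argument is a direct specialization of Theorem~\ref{cor:aperiodic} to $t = 1$, and the delicate part is merely confirming that the bound \eqref{eq:bound_fermat} beats $N/2$ under the stated range of $N$. The conceptual point worth flagging is that, in contrast with the Hall's sextic case where the bound $(14/3)^k k\sqrt{T}\log T$ allowed $k$ to grow like $\log T$ and thus produced a $(\log N)^2$ lower bound, here the absence of correlation bounds on $C_k$ for $k \ge 3$ forces us to stop at $t = 1$. This is precisely why the conclusion is only linear in $\log N$ rather than quadratic, and suggests that any improvement would require higher-order correlation estimates for the Fermat quotient threshold sequence.
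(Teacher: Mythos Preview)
Your proposal is correct and follows essentially the same route as the paper. The only cosmetic difference is that the paper invokes Corollary~\ref{cor:result} with $K=3$, whereas you apply the underlying Theorem~\ref{cor:aperiodic} directly with $t=1$; since Corollary~\ref{cor:result} is nothing more than Theorem~\ref{cor:aperiodic} combined with Stirling, the two arguments are the same, and your direct route even sidesteps the need to say anything about $C_1$.
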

\begin{proof}
  Again, it is easy to see that if $N> 2p^{1+\varepsilon}(\log p)^3$, then
  the correlation of the sequence of order $2$ is less than $N/2$.
  By Corollary ~\ref{cor:result}, taking $K=3$ and using the bound in
  Equation~\eqref{eq:bound_fermat}, we get the result.
\end{proof}

This improves the bound of order $(\log N - \log p)/\log\log
p$, given by Chen et al.\@~\cite[Theorem
  4]{chen2010structure}. As shown by this result, even weak bounds lead to improvements on the correlation measure provides information about the linear complexity.\\

\textbf{Error linear complexity profile of sequences}.
Another application is to lower bound the $K$-error linear
complexity profile, i.e. the minimum linear complexity profile among
sequences differing from the studied one in at most $K$
entries.
In particular, let us bound the $K$-linear complexity of  ${\cal E}$ and ${\cal H}$.
\begin{cor}
  For $N<T (=p^2)$, the $N$th linear complexity of the binary threshold sequence ${\cal E}$, allowing at most $N/6$ entry switches, is greater than $\log N$.
\end{cor}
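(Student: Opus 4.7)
My approach is to reduce this $K$-error statement to Corollary~\ref{cor:result} applied to any Hamming-close perturbation of~${\cal E}$, using a Lipschitz-type estimate for the aperiodic correlation measure. Let $K = \lfloor N/6 \rfloor$ and let $\cS'$ be any binary sequence agreeing with ${\cal E}$ on all but at most $K$ of the first $N$ positions; it then suffices to show $L(\cS', N) > \log N$.

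The first step is to establish the stability bound
\[
|C_k({\cal E}, N) - C_k(\cS', N)| \le 2kK \qquad \text{for every } k \ge 1.
\]
The proof is a direct count: fix a window $(U, d_1, \ldots, d_k)$; a single flipped bit at position $m$ modifies the summand $(-1)^{s_{n+d_1}+\cdots+s_{n+d_k}}$ only for the (at most) $k$ indices $n = m - d_j$, and each modification changes the summand by $\pm 2$; summing over at most $K$ flipped bits yields the claimed bound.

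The second step combines this stability estimate with the bound $C_2({\cal E}, N) \ll p(\log p)^3$ from~\eqref{eq:bound_fermat} and with the analogous well-distribution bound $C_1({\cal E}, N) = o(N)$ implicit in the discrepancy analysis of~\cite{chen2010structure}, to verify the hypotheses of Corollary~\ref{cor:result} on~$\cS'$: for $N$ sufficiently large, both $C_1(\cS', N)$ and $C_2(\cS', N)$ fall below $N/2$. Feeding this into Corollary~\ref{cor:result} (or, equivalently, into the contrapositive of Theorem~\ref{cor:aperiodic} at $t = 1$, inspecting the resulting inequality $\lfloor N/2 \rfloor < 2^{L(\cS', N)}$ directly, without the Stirling approximation) then delivers $L(\cS', N) > \log N$ for all sufficiently large $N$, as required.

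The main obstacle is the quantitative trade-off between the Hamming radius $K$ and the correlation bound: at order $k = 2$ the stability estimate contributes $4K$ to $C_2(\cS', N)$, and for $K = N/6$ this is already $2N/3 > N/2$. Closing the hypothesis $C_2(\cS', N) < N/2$ thus requires $K \lesssim N/8$; accommodating the larger radius $N/6$ in the statement forces one to give up on the stronger order-$2$ route — which via Corollary~\ref{cor:result} at $K_{\text{cor}} = 3$ would yield the much stronger $L(\cS', N) \gtrsim \tfrac32 \log N$ — and rely instead on the order-$1$ channel (where the stability loss is only $2K = N/3$), which recovers exactly the claimed $\log N$ lower bound.
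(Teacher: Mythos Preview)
Your overall plan---bound the change in the order-$2$ correlation under at most $N/6$ bit flips, then feed this into Theorem~\ref{cor:aperiodic}/Corollary~\ref{cor:result}---is exactly the paper's route. Your Lipschitz constant is in fact the careful one: a single flip at position $m$ alters at most two summands of an order-$2$ correlation sum (those with $n=m-d_1$ or $n=m-d_2$), and each altered summand changes by $\pm 2$, so $K$ flips move $C_2$ by at most $4K$. The paper's proof records only $N/3$, i.e.\ it counts the $2K=N/3$ modified terms but silently drops the factor of $2$ coming from the sign flip. With $K=N/6$ your bound gives $2N/3>N/2$, and you are right to flag this as an obstruction to the hypothesis $C_2(\cS',N)<N/2$.

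The genuine gap is in your proposed remedy. There is no ``order-$1$ channel'' in this framework: Theorem~\ref{cor:aperiodic} at $t=1$ asserts a half peak at some $k$ with $1<k\le 2$, so its contrapositive requires $C_2<N/2$, not $C_1<N/2$. Equivalently, Corollary~\ref{cor:result} at $K_{\mathrm{cor}}=2$ reduces, via its own proof, to Theorem~\ref{cor:aperiodic} at $t=1$ and hence again to a condition on $C_2$; the ``$k<K$'' in its hypothesis is an off-by-one artefact, not an independent route through $C_1$. And $C_1$ alone cannot control $L$: the sequence $0,1,0,1,\ldots$ has $C_1(\cS,N)\le 1$ but $L(\cS,N)=2$. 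So your fallback does not close the argument. Either the Hamming radius must be tightened to roughly $N/8$ (so that $4K<N/2$), or a sharper stability estimate is needed; the paper reaches $N/6$ only through the factor-of-two slip you have implicitly corrected.
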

\begin{proof}
  Notice that a change in $N/6$ or fewer sequence elements increases the value of the correlation measure of order $2$ in $N/3$.
  This is trivial to see from the definition, because it modifies at most $N/3$ terms, so the correlation goes up by $N/3$.

  Together with the bound in
  Equation~\eqref{eq:bound_fermat}, we obtain the result.
\end{proof}
The proof of the next result follows the same path as the previous one.
\begin{cor}
  For $N<T$, the $N$th linear complexity of Hall's sextic sequence ${\cal H}$, changing at most $N/6$, is greater than $(\log N)^2$.
\end{cor}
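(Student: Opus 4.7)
The plan is to parallel the preceding corollary for the Fermat quotient threshold sequence ${\cal E}$. Let ${\cal H}'$ denote an arbitrary binary sequence obtained from Hall's sextic sequence ${\cal H}$ by flipping at most $N/6$ of its first $N$ entries. My goal is to show that such a perturbation still leaves the hypothesis of Corollary~\ref{cor:result} satisfied for some threshold $K = \Theta(\log T) = \Theta(\log N)$, at which point that corollary immediately delivers $L({\cal H}', N) \gg (\log N)^2$.

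First I would quantify how much a bit flip inflates the correlation. Directly from the definition of $C_k$, flipping a single bit at some position $m$ alters the sign of at most $k$ summands of $\sum_n (-1)^{s_{n+d_1}+\cdots+s_{n+d_k}}$, namely those indexed by $n = m - d_j$ for $j = 1, \ldots, k$. Therefore $N/6$ such flips can inflate $C_k$ by at most an additive term proportional to $kN$, giving a bound of the form $C_k({\cal H}', N) \le C_k({\cal H}, N) + O(kN)$.

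Next I would combine this estimate with Equation~\eqref{Hall:bound} and the hypothesis $N < T$ to verify that $C_k({\cal H}', N) < N/2$ still holds for every $k$ up to some threshold $K$ of order $\log T$. This calculation mirrors the one carried out in the first Hall corollary of this section, so it amounts to routine bookkeeping. With this $K$ in hand, Corollary~\ref{cor:result} applied to ${\cal H}'$ yields $L({\cal H}', N) \gg K(\log N - \log K) \gg (\log N)^2$, and since ${\cal H}'$ was arbitrary within Hamming distance $N/6$ of ${\cal H}$, the claim on the error linear complexity follows.

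The main obstacle will be confirming that the two additive contributions to $C_k({\cal H}', N)$ --- the original exponential-in-$k$ bound $(14/3)^k k \sqrt{T} \log T$ and the linear-in-$k$ perturbation from the flipped bits --- can simultaneously be kept well below $N/2$ for $k$ as large as $\Theta(\log T)$. The exponential growth of the first term limits $K$ from above, while the perturbation constrains the usable fraction of errors; balancing these two constraints under the regime $N < T$ with $T$ large is the delicate part of the argument.
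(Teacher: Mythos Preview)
Your plan mirrors exactly what the paper intends: it simply says the proof ``follows the same path as the previous one'' (the Fermat quotient corollary), namely bound how much the bit flips inflate $C_k$ and then feed the result into Corollary~\ref{cor:result}. So at the level of strategy you are aligned with the paper.

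However, the step you yourself single out as the ``main obstacle'' is not just delicate --- it does not close with the stated error budget of $N/6$ flips. Your own perturbation estimate gives the precise constant: flipping one bit alters at most $k$ summands, each by $\pm 2$, so $N/6$ flips change any correlation sum by at most $2k\cdot N/6 = kN/3$. Hence
\[
C_k(\mathcal{H}',N)\le C_k(\mathcal{H},N)+\tfrac{kN}{3}.
\]
Already for $k=2$ the perturbation term is $2N/3>N/2$, so even if $C_k(\mathcal{H},N)$ were zero you cannot conclude $C_k(\mathcal{H}',N)<N/2$ for any $k\ge 2$, let alone for all $k$ up to $\Theta(\log T)$. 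Consequently Corollary~\ref{cor:result} is not applicable with $K\sim\log N$, and the route to the $(\log N)^2$ bound breaks down. (There is also a second issue you gloss over: the hypothesis here is only $N<T$, with no lower bound on $N$, so the original estimate $(14/3)^k k\sqrt{T}\log T$ need not be below $N/2$ either.)

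The paper's one-line ``same path'' proof inherits the same difficulty; the point is that your proposal does not supply the missing idea. To rescue the argument one would have to tighten the number of permitted flips to something like $O(N/K)$ with $K\sim\log N$, or else use a different mechanism than the crude $C_k<N/2$ criterion of Corollary~\ref{cor:result}.
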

In Table~\ref{tab:aperiodic-more}, we compare with previous
results the obtained bounds for the $N$th linear complexity of
several sequences.
The resulting bound by Theorem \ref{Thm:maximum}
on the $N$th maximum order complexity for all of
the sequences listed in the table is $\log N -2$.

\begin{table}[htbp]
\centering
\begin{tabular}{|l|l|l|}
\hline
Sequence                                                                                                                    & Previous lower bound & Corollary~\ref{cor:result}          \\ \hline
\begin{tabular}[c]{@{}l@{}}Logarithm\\ threshold \\ sequence\\ \cite{branwin06}\end{tabular} & $\log N/\log\log T$  & $\log N$             \\ \hline
\begin{tabular}[c]{@{}l@{}}Two-prime\\ generator\\ sequence \\ \cite{branwin06}\end{tabular}           & $N/\sqrt{T}$         & $\sqrt{N}\log N$     \\ \hline
\begin{tabular}[c]{@{}l@{}}Modified inverse\\ threshold \\sequence \\ \cite{chen2008modified}\end{tabular} & $\log N/\log\log T$  & $(\log N)\log\log T$ \\ \hline
\begin{tabular}[c]{@{}l@{}}Binary cyclotomic\\ sequence \\ \cite{chen2009construction}\end{tabular}                 & $\log N/\log\log T$  & $(\log N)\log\log T$ \\ \hline
\begin{tabular}[c]{@{}l@{}}Inversive \\ threshold sequence \\ \cite{niederreiter2007structure}\end{tabular}                 & $\log N/\log\log T$  & $(\log N)\log\log T$ \\ \hline
\end{tabular}
\caption{Bound comparison. The previous results are stated using  simplified notation, where $T$ stands for the period.}
\label{tab:aperiodic-more}
\end{table}

\section{Conclusions and Acknowledgments}
This paper presents generalizations of the results appearing in the articles
\cite{branwin06} and \cite{IW2017}. Thanks to these results, it is possible to use these results mount correlation attacks in systems using standard families  of binary sequences like Gold codes and Kasami families (see Table~\ref{tab:periodic}). The results regarding the aperiodic form of the correlation measure of order $k$ improve the lower bound on the $N$th linear complexity given several papers, as stated in Table~\ref{tab:aperiodic-more}. For those sequences, we provide new non-trivial lower bounds on the maximum order complexity.

Domingo G\'omez-P\'erez and Ana I.\@ G\'omez are supported by the Spanish \emph{Agencia Estatal de Investigaci\'on} project
\emph{Secuencias y curvas en criptograf\'{\i}a} (PID2019-110633GB-I00/AEI/10.13039/501100011033).

Z. Chen was partially supported by the National Natural Science
Foundation of China under grant No.~61772292, and by the Provincial Natural Science
Foundation of Fujian, China  under grant No.~2020J01905.

\end{document}